\title{Maximal Packing with Interference Constraints}
\author{Rakshith Jagannath, Radha Krishna Ganti and Neelesh S Upadhye 
        \thanks{Raksith Jagannath, Radha Krishna Ganti and Neelesh S Upadhye are with Indian Institute of Technology, Madras, India; e-mail:\{rejarr@gmail.com, rganti@ee.iitm.ac.in, neelesh@iitm.ac.in\}}}
\def\d{\mathbf{d}}
\def\e{\mathbf{e}}
\def\l{\ell}
\def\n{\mathbf{n}}
\def\o{\mathbf{1}}
\def\r{\mathbf{r}}
\def\v{\mathbf{v}}
\def\w{\mathbf{w}}
\def\x{\mathbf{x}}
\def\u{\mathbf{u}}
\def\A{\mathbf{A}}
\def\D{\mathbf{D}}
\def\F{\mathbf{F}}
\def\H{\mathbf{H}}
\def\I{\mathbf{I}}
\def\Q{\mathbf{Q}}
\def\R{\mathbf{R}}
\def\diag{\mathbb{D}}
\def\mean{\mathbb{E}}
\def\E{\mathbb{E}}
\def\el{\mathbb{L}}
\def\pr{\mathbb{P}}
\def\sgn{\mathbb{S}}
\def\tr{\mathbb{T}}
\def\ie{\emph{i.e.}}
\def\etc{\emph{etc}}
\def\N{\mathcal{N}}
\newtheorem{thm}{Theorem}
\newcommand{\NrmOne}[1]{\|{#1}\|_1}
\newcommand{\NrmTwo}[1]{\|{#1}\|_2}
\newcommand{\NrmZ}[1]{\|{#1}\|_0}
\DeclareMathOperator*{\argmax}{arg\,max}
\def\rj[#1]{{\color{blue}\sffamily\small\em $\Rightarrow$ #1$\Leftarrow$}}
\begin{document}
\maketitle
%-----------------------------
%Abstract Beginning
%-----------------------------
\begin{abstract}
In this work, we study the problem of scheduling a maximal set of transmitters subjected to an interference constraint across all the nodes. Given a set of nodes, the problem reduces to finding the maximum cardinality of a subset of nodes that can concurrently transmit without violating interference constraints. The resulting packing problem is a binary optimization problem and is  NP hard. We propose a semi-definite relaxation (SDR) for this problem and provide bounds on the relaxation.
\end{abstract}
\begin{IEEEkeywords}
semi-definite relaxation, interference, maximal packing, Euclidean random matrix,  randomization algorithms
\end{IEEEkeywords}
%-----------------------------
%Abstract Ending
%-----------------------------
%---------------------------------------------------------------
% Introduction Start
%----------------------------------------------------------------
\section{Introduction}
Interference is a major impediment in the current wireless networks, particularly in ad-hoc and wireless sensor networks. In these networks, interference is primarily managed through scheduling wherein the transmitting nodes are carefully chosen to avoid interference at the active links, while simultaneously maximizing the spatial reuse. The maximum number of nodes that can be spatially scheduled with a network interference constraint is an important metric that quantifies the performance of the scheduler. However, to the best of our knowledge, even this simple metric is difficult to be computed for general network topologies.

In this paper, we focus on the problem of computing the cardinality of the largest subset of nodes that can be scheduled from a given set of nodes with a constraint on the interference across the network. We assume an arbitrary network topology and model the spatial interference pattern through a path-loss function. The cardinality of the maximal set can be obtained by solving a binary problem, which is NP-hard. Hence, we obtain bounds on the maximal node packing with interference constraints through a semi-definite relaxation (SDR) of the original binary problem by using Shor's technique \cite{Ben-Tal,sdr-survey}.

In \cite{gupta2000capacity}, the maximal packing problem has been studied with the protocol model for interference with results from random geometric graphs. It has been shown that the maximal density of scheduled nodes scales as $O(1/\sqrt{N})$, where $N$ is the total number of nodes. In \cite{boyd,SS}, a related problem of sensor selection, \ie, selecting $K$ out of $N$ sensors that minimize the error in estimating network parameters is studied and solutions are proposed using several frameworks such as convex optimization, hypothesis testing, experiment design, compressed sensing  and sparse signal recovery \etc. Another related problem is the signal-to-interference-and-noise ratio (SINR) maximization problem wherein the SINR at each node is maximized \cite{SIR,SIR-1,SIR-2} using techniques from semi-definite programming and graph theory. However, the above methods assume that the maximum number of nodes is fixed and the interference amongst the selected nodes is optimised. In the current work, a more fundamental question of finding the maximum number of nodes  for a given  interference constraint across the network is explored.

{\em Notation}: In this paper, we use bold lower case letters to represent vectors and bold upper case letters to represent matrices. For a given matrix (vector) $\A$, $\A^T$ denotes the regular transpose. For a vector $\x$,  $\NrmZ{\x}$, $\NrmOne{\x}$, $\NrmTwo{\x}$ denote the $l_0$ pseudo norm which is equal to the number of non-zero elements in $\x$, $l_1$ and $l_2$ norms respectively.  $\mean$ denotes the expectation operation while $\pr$ denotes probability.  For a matrix $\A$ and a vector $\u$, $\diag(\A)$ denotes a vector of diagonal entries of $\A$, $\tr(\A)$ denotes the trace of $\A$, $\sgn(\mathbf{u})$ denotes the sign of elements of $\mathbf{u}$, $\sgn(u)=+1$, if $u>0$, else $\sgn(u)=-1$. $\arcsin\A$ denotes sine-inverse of each element of $\A$ (see \cite{Ben-Tal}) and $\A\succeq\mathbf{0}$ implies $\x^{T}\A\x\geq0$ for all $\x\neq0$.
%---------------------------------------------------------------
% Introduction End 
%----------------------------------------------------------------
%\newpage
%---------------------------------------------------------------
% Signal Model Start
%----------------------------------------------------------------
\section{Signal Model}\label{model}
We consider $N$ nodes located at $\{\v_1, \v_2, \hdots, \v_N\} \subset \mathbb{R}^2 $.  The Euclidean distance between node $i$ and node $j$ is denoted by $r_{ij}$. In this paper, we neglect thermal noise and assume free space channel between nodes\footnote{Fading is neglected so as to simplify the notation and can be introduced without any modifications to the results.}. The path loss function is denoted by the function $\l(x):\mathbb{R}^2 \to [0,\infty)$. A commonly used path loss function is $\l(\v) =\|\v\|_{2}^{-\beta}$, where $\beta >2$ is the path loss exponent. Let $x_{j}\in \{0,1\}$ denote an indicator variable which equals one if node $j$ is active and zero otherwise.  Assuming unit power transmission,  the interference power at a node $i$ due to  other transmitting nodes  is
\begin{align}
w_i = \sum_{j = 1}^{N}  \l(r_{ij})x_{j}, \quad  i = 1,2,\ldots,N.
\end{align}
 Let \[\d_{i} = \begin{bmatrix}\l(r_{i1}),\ldots,0,\l(r_{ii+1}),\ldots,\l(r_{iN})\end{bmatrix}^{T},\] and
$\x = \begin{bmatrix}x_1,x_2,\ldots,x_N\end{bmatrix}^{T}$. Then the interference (or received signal power) at node $i$  is
\begin{equation}
w_i = \d_{i}^{T}\x.
\end{equation}
Let $\w=\begin{bmatrix}w_1,w_2,\ldots,w_N\end{bmatrix}^{T}$. Then $\w = \mathbf{Dx}$,
where  $\D = \begin{bmatrix}\d_{1},\d_{2},\ldots,\d_{N}\end{bmatrix}^{T}$ is the distance matrix whose elements are non-negative.

Using the above notation, the maximum number of nodes in the network while limiting the interference powers across all the nodes to be less than $\epsilon$, is given by
\begin{equation}
\sigma = \max_{\x\in{\{0,1\}}^{N}} \NrmZ{\x}\hspace{2mm}\mathrm{s.t.}\hspace{2mm}\NrmTwo{\w}^2\leq\epsilon.
\end{equation}
In the above optimization problem, we consider the two norm of the interference across all the nodes for analytical tractability. However, since $\|\w\|_2 \geq \|\w\|_\infty$, the constraint $\NrmTwo{\w}^2\leq\epsilon$ implies a bound on the interference at individual nodes.

Since, $\x\in{\{0,1\}}^{N}$, we have $\NrmZ{\x} = \NrmOne{\x} = \NrmTwo{\x}^2$. Also, $\w = \mathbf{Dx}$, and hence the above optimization problem can be rewritten as
\begin{equation}
\sigma = \max_{\x\in{\{0,1\}}^{N}} \NrmTwo{\x}^2\hspace{2mm}\mathrm{s.t.}\hspace{2mm}\x^{T}\F\x\leq\epsilon,
\label{mainoptone}
\end{equation}
where  $\F = \D^{T}\D$ is a symmetric positive semi-definite matrix. The properties of $\D$ and $\F$ are discussed in detail in \cite{distance1}. The optimization problem (\ref{mainoptone}) is an NP hard \cite{bestsdp}, discrete optimization problem and there are no closed form analytical solutions to the above problem.

Observe that $\epsilon$ is a network wide interference constraint and hence for a uniform network, it is easy to observe that  $\epsilon$ should scale with $N$ if a significant subset of nodes have to be activated.  Also note that $\sigma\leq N$ is a trivial bound, since the number of active nodes is always bounded above by the total number of nodes.

In this paper, we first obtain a SDR of  (\ref{mainoptone}).  Since SDR is a relaxation of (\ref{mainoptone}), the optimum value of the SDR ($\rho$) is related to $\sigma$ as $\rho\leq\sigma$.  This relaxation is then used to  propose a randomization algorithm (rounding) to obtain a $\x$ for (\ref{mainoptone}) also called as rounding in literature \cite{GWbook}. This rounding technique is then used to  obtain bounds of the type $\rho\leq\theta\sigma$ for some constant $\theta>1$.

%---------------------------------------------------------------
% SDR Start
%----------------------------------------------------------------
\section{Semi-definite Relaxation}
%-------------------------------------------------------------------

The binary problem in \eqref{mainoptone} is first converted into a $\{-1,+1\}$ problem using the transformation $\v = 2\x-\o$. Hence the equivalent problem is 
\begin{eqnarray}
\label{eq:modified}
\sigma = \max_{\v}\frac{1}{4}\big(\v^{T}\v+\o^{T}\v+\v^{T}\o+ \o^{T}\o\big),\\
\mathrm{s.t.}\hspace{2mm} \v^{T}\F\v+\o^{T}\F\v+ \v^{T}\F\o+\o^{T}\F\o
\leq 4\epsilon, \nonumber\\
v_i^2 = 1, i = 1,2,\ldots,N. \nonumber
\end{eqnarray}
The above optimization problem can be rewritten as
\begin{align}
\sigma = \max_{\v}\frac{1}{4} \begin{bmatrix}\v^{T}& 1\end{bmatrix}\Q\begin{bmatrix}\v\\ 1 \end{bmatrix},\label{optgood}\\
\mathrm{s.t.}\hspace{2mm}
\begin{bmatrix}\v^{T}& 1\end{bmatrix}\R\begin{bmatrix}\v\\ 1 \end{bmatrix}\nonumber\leq 4,\\
\v^{T}\big(\e_i\e_i^{T}\big)\v = 1, \quad i = 1,2,\ldots,N,
\end{align}
where $\e_i$ is the  column $i$  of the  identity matrix and $\Q$ and $\R$ are given by
\begin{align}
\Q =
\begin{bmatrix} \I&&\o\\ \o^{T}&&\o^{T}\o\end{bmatrix}, &&
\R =
\frac{1}{\epsilon}\begin{bmatrix} \F&&\mathbf{F1}\\ \o^{T}\F&&\o^{T}\F\o\end{bmatrix}.
\end{align}
Observe that $\begin{bmatrix}\v^{T}& 1\end{bmatrix}\Q\begin{bmatrix}\v\\ 1 \end{bmatrix} = \tr \left(\Q\begin{bmatrix}\v\\ 1 \end{bmatrix} \begin{bmatrix}\v^{T}& 1\end{bmatrix}\right)$. Denoting $\begin{bmatrix}\v\\ 1 \end{bmatrix} \begin{bmatrix}\v^{T}& 1\end{bmatrix}$ by $\H$ and dropping the rank one constraint, we obtain the following  \cite{Ben-Tal}  semi-definite relaxation of the optimization problem in \eqref{optgood}
\begin{eqnarray}
\rho = \max_{\H} \tr(\mathbf{QH})/4,\hspace{2mm}
\mathrm{s.t.}\hspace{1mm} \tr(\R\H)\leq 4,\label{sdrgood}\\
\H = \H^{T}\succeq 0,\hspace{1mm}
\H_{ii} = 1, i = 1,2,\ldots,N.
\end{eqnarray}
The relaxed solution of the SDR of equation (\ref{sdrgood}) can now be obtianed using numerical solvers (e.g. \cite{cvx}). Now, given the optimal solution $\widehat{\H}$ for the SDR, we propose below a randomization algorithm to obtain a solution ($\v$) for the unrelaxed problem (\ref{optgood}).   
\begin{algorithm}
\caption{Rounding algorithm}
\label{Algo1}
\begin{algorithmic}[1]
\State\textbf{Input:} $\widehat{\H}$, $K$.
\State\textbf{Initialize:} Set $\N = \phi$ and $\r = \sgn(\n)$, where $\n$ is a zero mean Gaussian random vector with covariance matrix $\widehat{\H}$. 
\State\textbf{Decision:} If $\r^{T}\R\r\leq4$ and $r_{N+1} = 1$, $\N = \N\bigcup\r$.
\State\textbf {Iterate:} from step-2 for $K>N$ steps.
\State\textbf{Output:} $\r^* = \argmax_{\r\in\N}\NrmTwo{\r}^2$, $\sigma = \NrmTwo{\r^*}^2$.
\end{algorithmic}
\end{algorithm}

Now, we obtain the bounds relating $\sigma$ (optimal value of the unrelaxed problem) and $\rho$ (optimal value of the SDR) using the proposed randomization algorithm.
\begin{thm}\label{Thm1}
Let $\Lambda$ denote the diagonal matrix with eigenvalues of $\R\widehat{\H}$. Let $p^2 = \pr((\R -\Lambda)\succeq\mathbf{0})$, then 
 \begin{equation}
\sigma\leq\rho\leq\frac{\pi}{2 }\left(\frac{1}{1-p}\right)\sigma.
\label{bndmain2}
\end{equation}
\end{thm}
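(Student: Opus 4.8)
The plan is to split \eqref{bndmain2} into the easy direction and a Goemans--Williamson-style analysis of one step of Algorithm~\ref{Algo1}. First, $\sigma\le\rho$ is immediate: \eqref{optgood} is an exact reformulation of \eqref{mainoptone}, and \eqref{sdrgood} is obtained from \eqref{optgood} by expressing everything through $\H=[\v^{T}\ 1]^{T}[\v^{T}\ 1]$ and then dropping the rank-one restriction on $\H$, which only enlarges the feasible set and so can only increase the optimum. All the remaining effort goes into bounding $\rho$ from above.

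Factor the SDR optimum as $\widehat{\H}=\mathbf L^{T}\mathbf L$ with unit-norm columns $\u_1,\dots,\u_{N+1}$ (possible because $\widehat{\H}\succeq\mathbf 0$ and $\widehat{\H}_{ii}=1$); then, for a standard Gaussian vector $\mathbf g$, setting $\n=\mathbf L^{T}\mathbf g$ and $\r=\sgn(\n)$ reproduces step~2 of Algorithm~\ref{Algo1}, with $r_i=\sgn(\u_i^{T}\mathbf g)$. The classical expectation formula $\mean[r_ir_j]=\frac{2}{\pi}\arcsin(\u_i^{T}\u_j)$ gives $\mean[\r\r^{T}]=\frac{2}{\pi}\arcsin\widehat{\H}$ (see \cite{Ben-Tal}). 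Since $\widehat{\H}\succeq\mathbf 0$ has unit diagonal its entries lie in $[-1,1]$, the Taylor series $\arcsin t=t+\frac{t^{3}}{6}+\cdots$ has non-negative coefficients, and each odd Hadamard power $\widehat{\H}^{\circ(2k+1)}$ is positive semidefinite by the Schur product theorem; hence $\arcsin\widehat{\H}-\widehat{\H}\succeq\mathbf 0$. Combined with $\Q\succeq\mathbf 0$ this yields the objective estimate
\[
\mean\!\left[\tfrac14\,\r^{T}\Q\r\right]=\frac{1}{4}\cdot\frac{2}{\pi}\,\tr\!\big(\Q\arcsin\widehat{\H}\big)\ \ge\ \frac{1}{4}\cdot\frac{2}{\pi}\,\tr\!\big(\Q\widehat{\H}\big)=\frac{2}{\pi}\,\rho .
\]
Both $\r^{T}\Q\r$ and $\r^{T}\R\r$ are invariant under $\r\mapsto-\r$, which flips $r_{N+1}$, so this estimate also holds conditioned on $\{r_{N+1}=1\}$ and $\pr(r_{N+1}=1)=\tfrac12$.

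Next I would control feasibility. Since $\r$ has $\pm1$ entries, $\r^{T}\Lambda\r=\tr(\Lambda)=\tr(\R\widehat{\H})\le4$ for \emph{every} realization, the last inequality being the SDR constraint evaluated at $\widehat{\H}$. Therefore $\r^{T}\R\r=\tr(\R\widehat{\H})+\r^{T}(\R-\Lambda)\r$, and $\r$ is feasible for \eqref{optgood} as soon as $\r^{T}(\R-\Lambda)\r\le0$. The quantity $p$, defined by $p^{2}=\pr((\R-\Lambda)\succeq\mathbf 0)$, is exactly the control on how often $\R-\Lambda$ fails to be negative semidefinite, and a symmetrization over the random signs of $\r$ is meant to turn it into a feasibility bound of the form $\pr\!\big(\r^{T}(\R-\Lambda)\r\le0\ \big|\ r_{N+1}=1\big)\ge1-p$. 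To finish, put $G=\{r_{N+1}=1,\ \r^{T}\R\r\le4\}$: on $G$ the rounded vector is feasible for \eqref{optgood}, so $\tfrac14\r^{T}\Q\r\le\sigma$, while $\tfrac14\r^{T}\Q\r\ge0$ always because $\Q\succeq\mathbf 0$; balancing the objective estimate $\mean[\tfrac14\r^{T}\Q\r\mid r_{N+1}=1]\ge\frac{2}{\pi}\rho$ against $\pr(G\mid r_{N+1}=1)\ge1-p$ in a conditioning argument should give $\frac{2}{\pi}\rho\le\frac{1}{1-p}\sigma$, \ie\ \eqref{bndmain2}. Since $\pr(G)\ge\tfrac12(1-p)>0$, iterating the draw $K>N$ times then makes Algorithm~\ref{Algo1} return, with high probability, a feasible $\r^{*}$ attaining a value consistent with this bound.

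The step I expect to be the real obstacle is converting the spectral quantity $p$ into an honest lower bound on $\pr(\r\text{ feasible})$. A first-moment estimate does not suffice, since $\mean[\r^{T}\R\r]=\frac{2}{\pi}\tr(\R\arcsin\widehat{\H})$ can exceed $4$, so feasibility cannot be read off from Markov's inequality; one has to treat $\r^{T}(\R-\Lambda)\r$ as a Rademacher quadratic form and exploit finer (second-moment / Cauchy--Schwarz-type) cancellations, and the appearance of $p^{2}$ rather than $p$ in the hypothesis is a strong hint that this is precisely where the loss factor $\frac{1}{1-p}$ originates. By comparison, the objective-side inequality (the domination $\arcsin\widehat{\H}\succeq\widehat{\H}$ via Hadamard powers together with $\Q\succeq\mathbf 0$) and the trivial direction $\sigma\le\rho$ are routine.
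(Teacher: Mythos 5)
Your building blocks coincide with the paper's: the relaxation argument for $\sigma\le\rho$, the sign rounding $\r=\sgn(\n)$ with $\n\sim\mathcal{N}(\mathbf{0},\widehat{\H})$, the domination $\tr(\Q\widehat{\H})\le\tr(\Q\arcsin\widehat{\H})=\frac{\pi}{2}\mean[\r^{T}\Q\r]$, the symmetry in $r_{N+1}$, and the key observation that $\r^{T}\Lambda\r=\tr(\Lambda)=\tr(\R\widehat{\H})\le4$ so that feasibility reduces to $\r^{T}(\R-\Lambda)\r\le0$. But the assembly has a genuine gap, and it is precisely the part you leave as ``should give'': an expectation bound $\mean[\tfrac14\r^{T}\Q\r\mid r_{N+1}=1]\ge\tfrac{2}{\pi}\rho$ together with $\pr(G\mid r_{N+1}=1)\ge1-p$ does \emph{not} imply $\tfrac{2}{\pi}\rho\le\tfrac{1}{1-p}\sigma$. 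Off the good event $G$ the objective is only bounded by $N$ (indeed $\tfrac14\r^{T}\Q\r$ counts the coordinates with $r_i=r_{N+1}$ and can equal $N$), so splitting the expectation yields at best $\tfrac{2}{\pi}\rho\le\sigma+Np$; the expectation mass could sit entirely on the infeasible realizations, and no conditioning trick recovers the claimed bound. What is needed is a statement that with \emph{positive probability} the objective is large and $\r$ is feasible simultaneously. The paper obtains exactly this by applying the Paley--Zygmund inequality to $\r^{T}\Q\r$, giving $\pr\big(\r^{T}\Q\r\ge\tfrac{\pi}{2\theta}\mean[\r^{T}\Q\r]\big)$ bounded below in terms of $\big(1-\tfrac{\pi}{2\theta}\big)^2$, and then a union bound against $\pr(\r^{T}\R\r>4)$; positivity of $\big(1-\tfrac{\pi}{2\theta}\big)^2-p^2$ forces $\theta>\tfrac{\pi}{2(1-p)}$, and any realization in the joint event is feasible for \eqref{optgood} with $\tfrac14\r^{T}\Q\r\ge\rho/\theta$, whence $\rho\le\theta\sigma$. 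So the factor $\tfrac{1}{1-p}$ arises from the squared Paley--Zygmund constant played off against $p^{2}$, not from a $(1-p)$ feasibility probability.

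The second gap is the feasibility estimate itself: your claimed bound $\pr\big(\r^{T}(\R-\Lambda)\r\le0\mid r_{N+1}=1\big)\ge1-p$ is never proved, and the Rademacher-chaos / second-moment program you sketch for it is not carried out (nor is it what the paper does). After the $\Lambda$ observation, the paper simply bounds $\pr\big(\r^{T}\R\r>\r^{T}\Lambda\r\big)$ by $\pr\big(\R-\Lambda\succeq\mathbf{0}\big)=p^{2}$, i.e.\ it invokes the quantity defined in the theorem statement directly (the authors themselves acknowledge that this step, like the union bound, is conservative and loose about the underlying measure). So your proposal correctly identifies the ingredients and even the likely source of difficulty, but it is missing the two devices that actually close the argument: Paley--Zygmund to turn the arcsine expectation identity into a probability statement about the objective, and the direct use of $p^{2}$ to control the infeasibility probability.
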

\begin{proof}
The left side of the inequality holds because of the relaxation argument. To prove the right side, let $\widehat{\H}$ be the optimal solution of the SDR in \eqref{sdrgood}.
  So we have $\widehat{\H} = \widehat{\H}^{T}\succeq 0$, $\diag(\widehat{\H}) = \o$ and $\tr(\R\widehat{\H})\leq4$. We observe that $\widehat{\H}$ satisfies the properties of the covariance matrix of a Gaussian random vector.  Let $\n$ be a Gaussian random vector with mean $\mathbf{0}$ and covariance matrix $\widehat{\H}$. Let $\r = \sgn(\n)$, so that $r_i^2 = 1, i = 1,2,\ldots,N+1$. We can now choose a realization of $\r$ such that $r_{N+1} = 1$, so that it can be a feasible solution for the  optimization problem in  (\ref{eq:modified}). Clearly, such a realization of $\r$ always exists. Now, to obtain the right side of the inequality \eqref{bndmain2}, our goal is to show that such a realization $\r$ also satisfies $\theta\r^{T}\Q\r\geq \tr(\Q\widehat{\H})$ and $\r^{T}\R\r\leq4$ for some $\theta>1$.
  
Let $ f(\n)$ be the density of the Gaussian vector $\n$. We have
\begin{align*}
  &\pr\big(\r^{T}\Q\r \geq\frac{1}{\theta} \tr(\Q\widehat{\H}), \r^{T}\R\r\leq4,r_{N+1}=1\big)\\
  &= \int_{\r^{T}\Q\r \geq\frac{1}{\theta} \tr(\Q\widehat{\H}), \r^{T}\R\r\leq4,r_{N+1}=1} f(\n) \d \n\\
  &\stackrel{(a)}{=}  \int_{\r^{T}\Q\r \geq\frac{1}{\theta} \tr(\Q\widehat{\H}), \r^{T}\R\r\leq4,r_{N+1}=-1} f(\n) \d \n\\
  &=\pr\big(\r^{T}\Q\r \geq\frac{1}{\theta} \tr(\Q\widehat{\H}), \r^{T}\R\r\leq4,r_{N+1}=-1\big),
\end{align*}
where $\r = \sgn(\n)$. Here $(a)$ follows by the change of variables $\n \to -\n$ and the fact that a zero mean multivariate Gaussian density satisfies $f(\n)=f(-\n)$.
Hence
\begin{align*}
&\pr\big(\r^{T}\Q\r \geq\frac{1}{\theta} \tr(\Q\widehat{\H}), \r^{T}\R\r\leq4,r_{N+1}=1\big)\\
  &=\frac{1}{2}\pr\big(\r^{T}\Q\r \geq\frac{1}{\theta} \tr(\Q\widehat{\H}), \r^{T}\R\r\leq4\big).
\end{align*}
Hence it suffices to show that the RHS of the above equation is non zero. We have
\begin{flalign}
&\pr\big(\r^{T}\Q\r\geq\frac{1}{\theta}\tr(\Q\widehat{\H}),
\r^{T}\R\r\leq4\big),&\nonumber\\
\geq&\pr\big(\r^{T}\Q\r\geq\frac{1}{\theta}\tr(\Q\arcsin\widehat{\H}),
\r^{T}\R\r\leq4\big),&\label{a}
\end{flalign}
where (\ref{a}) follows from $\tr(\Q\widehat{\H})\leq\tr(\Q\arcsin\widehat{\H})$ (see \cite{Nestrov}). Now, applying the union bound to inequality (\ref{a}), we have
\begin{flalign}
\geq&\pr\big(\r^{T}\Q\r\geq\frac{1}{\theta}\tr(\Q\arcsin\widehat{\H})\big) -\pr\big(\r^{T}\R\r> 4\big),&\nonumber
\end{flalign}
Now, $\tr(\Q\arcsin\widehat{\H}) = \frac{\pi}{2}\mean\{\r^{T}\Q\r\}$ (see \cite{GWbook}). Using these in the above inequality, we have
\begin{flalign*}
\geq&\pr\big(\r^{T}\Q\r\geq\frac{\pi}{2\theta}\mean\{\r^{T}\Q\r\}\big) -\pr\big(\r^{T}\R\r>4\big),&
\end{flalign*}
Using the Paley-Zygmund inequality (see \cite{dasgupta2011probability}) in the first part
\begin{flalign}
\geq&\bigg(1-\bigg(\frac{\pi}{2\theta}\bigg)\bigg)^2\frac{\bigg(\mean\{\r^{T}\Q\r\}\bigg)^2}
{\mean\{(\r^{T}\Q\r)^2\}}-\pr\big(\r^{T}\R\r>4\big),&\nonumber\\
\geq&\bigg(1-\bigg(\frac{\pi}{2\theta}\bigg)\bigg)^2-\pr\big(\r^{T}\R\r>4 \big)&\label{b}
\end{flalign}
where we have used $\E[x^2] \geq (\E[x])^2$ in (\ref{b}). Let $\Lambda$ be a diagonal matrix of eigenvalues of $\R\widehat{\H}$. Since $\tr(\R\widehat{\H})\leq 4$, and $\r^T \Lambda \r = \tr(\R\widehat{\H})$ because $r_i^2=1$. Hence, (\ref{b}) reduces to
\begin{flalign}
&\geq\bigg(1-\bigg(\frac{\pi}{2\theta}\bigg)\bigg)^2-\pr\big(\r^{T}\R\r>\r^{T}\Lambda\r\big),&\\
&\geq\bigg(1-\bigg(\frac{\pi}{2\theta}\bigg)\bigg)^2-\pr\big(\R-\Lambda\succeq\mathbf{0}),&\\
&=\bigg(1-\bigg(\frac{\pi}{2\theta}\bigg)\bigg)^2-p^2,&
\end{flalign}
We get $\theta>\frac{\pi}{2}\big(\frac{1}{1-p}\big)$ by imposing $\big(1-\big(\frac{\pi}{2\theta}\big)\big)^2-p^2>0$.
\end{proof}

We observe that the necessary condition for the successful working of the randomization algorithm (step-$3$) is that the vectors $\r$ constructed in step-$2$ satisfy the quadratic constraint of the unrelaxed problem (\ref{optgood}). In the above proof, we have assumed that the probability of existence of such vectors $\r$ with non-zero probability, i.e. $\pr(\r^{T}\R\r\leq4)=1-p^2>0$. However, as $\r$ is a correlated Bernoulli random vector, evaluating the probability 
$\pr(\r^{T}\R\r\leq4)$ requires $2^{N}$ checks which cannot be performed in practice. Hence, we use a weaker upper bound on $p^2$ which depends on the known matrices $\R$ and $\widehat{\H}$ and which was observed (through simulations) to be satisfied by sparse networks (with node-density $\lambda\leq\frac{1}{\sqrt{N}}$). In bounding $p^2$, we have also taken care of appropriate probability measures. 

We would like to re-emphasize that the value of $\theta$ derived above in Theorem-\ref{Thm1} is not necessarily the best possible bound on $\rho$, because the upper bound on $p$ is very conservative since there is a major loss of probability measure while applying the union bound and also when we bound $\pr(\r^{T}\R\r>4)$ by $\pr(\R-\Lambda\succeq\mathbf{0})$. 
%---------------------------------------------------------------
% Simulation Start
%----------------------------------------------------------------
\section{Simulations}
%------------------------------------------------------------------
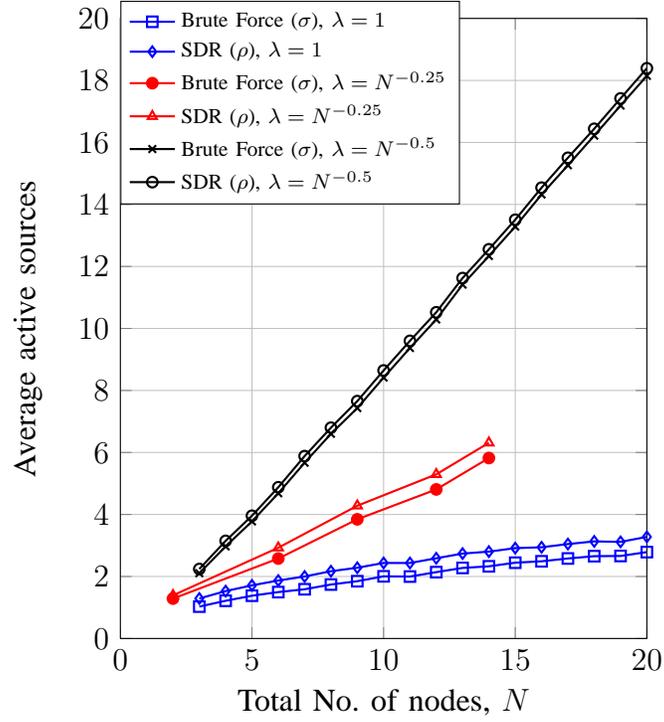
\begin{figure}
\centering
\begin{tikzpicture}
\draw[help lines] (0,0);
\begin{axis}[%
width=7cm,
height=8.25cm,
scale only axis,
xmin=0,
xmax=20,
grid=both,
major grid style={line width=.2pt,draw=gray!50},
xlabel={Total No. of nodes, $N$},
ymin=0,
ymax=20,
ylabel={Average active sources},
legend style={at={(0,0.7)},anchor=south west,draw=black,fill=white,legend cell align=left}
]

%\draw[step=1cm, black,thick] (-2,-2) grid (6,6);
%\draw[help lines] (0,0) grid (2,3);

\addplot [
color=blue,thick,
solid,
mark=square,
mark options={solid}
]
table[row sep=crcr]{
	3.0000    1.0290\\
    4.0000    1.2170\\
    5.0000    1.3800\\
    6.0000    1.4970\\
    7.0000    1.5880\\
    8.0000    1.7410\\
    9.0000    1.8480\\
   10.0000    2.0050\\
   11.0000    1.9980\\
   12.0000    2.1400\\
   13.0000    2.2760\\
   14.0000    2.3280\\
   15.0000    2.4450\\
   16.0000    2.4900\\
   17.0000    2.5810\\
   18.0000    2.6550\\
   19.0000    2.6620\\
   20.0000    2.7870\\
};
\addlegendentry{\scriptsize{Brute Force ($\sigma$), $\lambda = 1$}};

\addplot [
color=blue,thick,
solid,
mark=diamond,
mark options={solid}
]
table[row sep=crcr]{
	3.0000    1.2910\\
    4.0000    1.5289\\
    5.0000    1.7140\\
    6.0000    1.8699\\
    7.0000    1.9999\\
    8.0000    2.1702\\
    9.0000    2.2817\\
   10.0000    2.4360\\
   11.0000    2.4352\\
   12.0000    2.5881\\
   13.0000    2.7393\\
   14.0000    2.8022\\
   15.0000    2.9149\\
   16.0000    2.9425\\
   17.0000    3.0457\\
   18.0000    3.1325\\
   19.0000    3.1166\\
   20.0000    3.2780\\            
};
\addlegendentry{\scriptsize{SDR ($\rho$), $\lambda=1$}};
%----------------------------------------------------------------------
\addplot [
color=red,thick,
solid,
mark=*,
mark options={solid}
]
table[row sep=crcr]{
   	2.0000    1.2870\\
    6.0000    2.5790\\
    9.0000    3.8440\\
   12.0000    4.8080\\
   14.0000    5.8210\\        
};
\addlegendentry{\scriptsize{Brute Force ($\sigma$), $\lambda = N^{-0.25}$}};

\addplot [
color=red,thick,
solid,
mark=triangle,
mark options={solid}
]
table[row sep=crcr]{
   	2.0000    1.3916\\
    6.0000    2.9272\\
    9.0000    4.2818\\
   12.0000    5.2947\\
   14.0000    6.3120\\     
};
\addlegendentry{\scriptsize{SDR ($\rho$), $\lambda = N^{-0.25}$}};
%----------------------------------------------------------------------
\addplot [
color=black,thick,
solid,
mark=x,
mark options={solid}
]
table[row sep=crcr]{
	3.0000    2.1050\\
    4.0000    2.9710\\
    5.0000    3.7770\\
    6.0000    4.6840\\
    7.0000    5.6690\\
    8.0000    6.6020\\
    9.0000    7.4320\\
   10.0000    8.4160\\
   11.0000    9.3700\\
   12.0000   10.2900\\
   13.0000   11.4140\\
   14.0000   12.3370\\
   15.0000   13.2850\\
   16.0000   14.3200\\
   17.0000   15.2630\\
   18.0000   16.2200\\
   19.0000   17.1890\\
   20.0000   18.1470\\
};
\addlegendentry{\scriptsize{Brute Force ($\sigma$), $\lambda = N^{-0.5}$}};

\addplot [
color=black,thick,
solid,
mark=o,
mark options={solid}
]
table[row sep=crcr]{
	3.0000    2.2418\\
    4.0000    3.1483\\
    5.0000    3.9591\\
    6.0000    4.8773\\
    7.0000    5.8829\\
    8.0000    6.7991\\
    9.0000    7.6564\\
   10.0000    8.6451\\
   11.0000    9.6000\\
   12.0000   10.5211\\
   13.0000   11.6252\\
   14.0000   12.5525\\
   15.0000   13.5060\\
   16.0000   14.5412\\
   17.0000   15.5048\\
   18.0000   16.4399\\
   19.0000   17.4194\\
   20.0000   18.3918\\         
};
\addlegendentry{\scriptsize{SDR ($\rho$), $\lambda = N^{-0.5}$}};%----------------------------------------------------------------------

\end{axis}
\end{tikzpicture}%
\caption{$\rho$ versus $N$ for $\epsilon = 10$, path-loss exponent, $\beta=3$ and different network densities, $\lambda$.}
\label{fig:Fig01}
\end{figure}
%--------------------------------------
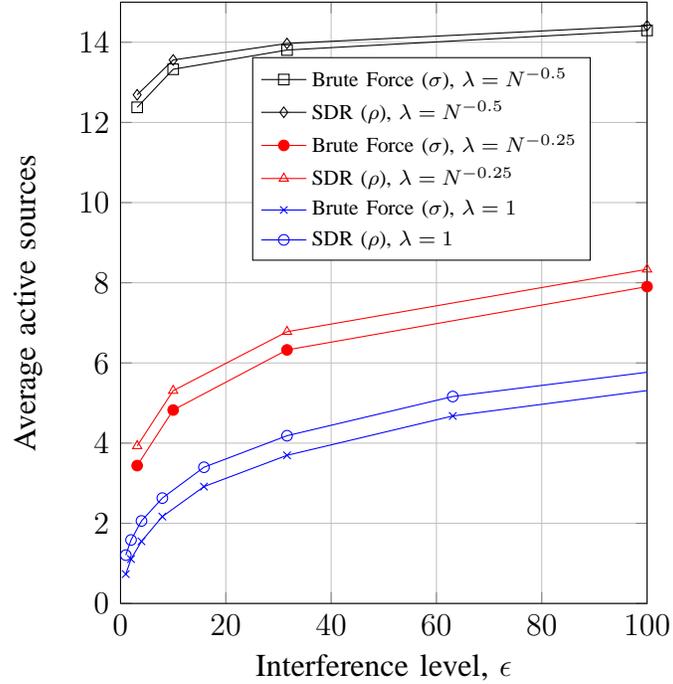
\begin{figure}
\centering
\begin{tikzpicture}
\draw[help lines] (0,0);
\begin{axis}[%
width=7cm,
height=8cm,
scale only axis,
xmin=0,
xmax=100,
grid=both,
grid style={line width=.1pt, draw=gray!10},
major grid style={line width=.2pt,draw=gray!50},
xlabel={Interference level, $\epsilon$},
ymin=0,
ymax=15,
ylabel={Average active sources},
legend style={at={(0.25,0.57)},anchor=south west,draw=black,fill=white,legend cell align=left}
]

%\draw[step=1cm, black,thick] (-2,-2) grid (6,6);
%\draw[help lines] (0,0) grid (2,3);

\addplot [
color=black,
solid,
mark=square,
mark options={solid}
]
table[row sep=crcr]{
  3.1623  12.3760\\
  10.0000 13.3260\\
  31.6228 13.8060\\
 100.0000 14.2950\\
% 316.2278 14.5630\\
% 1000     14.6610\\
% 3162.3   14.7670\\
% 10000    14.8660\\
% 31623    14.9080\\
% 100000   14.9580\\
};
\addlegendentry{\scriptsize{Brute Force ($\sigma$), $\lambda = N^{-0.5}$}};

\addplot [
color=black,
solid,
mark=diamond,
mark options={solid}
]
table[row sep=crcr]{
  3.1623  12.6889\\
  10.0000 13.5581\\
  31.6228 13.9696\\
 100.0000 14.4108\\
% 316.2278 14.6305\\
% 1000     14.7128\\
% 3162.3   14.7985\\
% 10000    14.8839\\
% 31623    14.9249\\
% 100000   14.9655\\        
};
\addlegendentry{\scriptsize{SDR ($\rho$), $\lambda = N^{-0.5}$}};
%----------------------------------------------------------------------
\addplot [
color=red,
solid,
mark=*,
mark options={solid}
]
table[row sep=crcr]{
  3.1623  3.4390\\
  10.0000 4.8250\\
  31.6228 6.3250\\
 100.0000 7.9040\\
};
\addlegendentry{\scriptsize{Brute Force ($\sigma$), $\lambda = N^{-0.25}$}};

\addplot [
color=red,
solid,
mark=triangle,
mark options={solid}
]
table[row sep=crcr]{
  3.1623  3.9295\\
  10.0000 5.3085\\
  31.6228 6.7788\\
 100.0000 8.3356\\     
};
\addlegendentry{\scriptsize{SDR ($\rho$), $\lambda = N^{-0.25}$}};
%----------------------------------------------------------------------
\addplot [
color=blue,
solid,
mark=x,
mark options={solid}
]
table[row sep=crcr]{
    1.0000    0.7320\\
    1.9953    1.1070\\
    3.9811    1.5500\\
    7.9433    2.1670\\
   15.8489    2.9150\\
   31.6228    3.6950\\
   63.0957    4.6770\\
  125.8925    5.7560\\
% 316.2278 14.5630\\
% 1000     14.6610\\
% 3162.3   14.7670\\
% 10000    14.8660\\
% 31623    14.9080\\
% 100000   14.9580\\
};
\addlegendentry{\scriptsize{Brute Force ($\sigma$), $\lambda =1$}};

\addplot [
color=blue,
solid,
mark=o,
mark options={solid}
]
table[row sep=crcr]{
  	1.0000    1.2060\\
    1.9953    1.5838\\
    3.9811    2.0585\\
    7.9433    2.6285\\
   15.8489    3.3981\\
   31.6228    4.1843\\
   63.0957    5.1625\\
  125.8925    6.1924\\
% 316.2278 14.6305\\
% 1000     14.7128\\
% 3162.3   14.7985\\
% 10000    14.8839\\
% 31623    14.9249\\
% 100000   14.9655\\        
};
\addlegendentry{\scriptsize{SDR ($\rho$), $\lambda = 1$}};
%----------------------------------------------------------------------
\end{axis}
\end{tikzpicture}%
\caption{$\rho$ versus $\epsilon$ for $N = 15$, path-loss exponent, $\beta=3$ and different network densities $\lambda$.}
\label{fig:Fig03}
\end{figure}
%------------------------------------------------------------
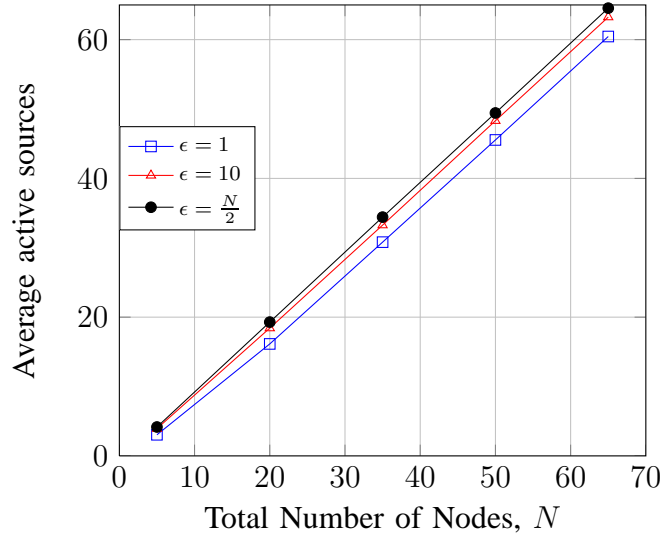
\begin{figure}
\centering
\begin{tikzpicture}
\draw[help lines] (0,0);
\begin{axis}[%
width=7cm,
height=6cm,
scale only axis,
xmin=0,
xmax=70,
grid=both,
grid style={line width=.1pt, draw=gray!10},
major grid style={line width=.2pt,draw=gray!50},
xlabel={Total Number of Nodes, $N$},
ymin=0,
ymax=65,
ylabel={Average active sources},
legend style={at={(0,0.5)},anchor=south west,draw=black,fill=white,legend cell align=left}
]

%\draw[step=1cm, black,thick] (-2,-2) grid (6,6);
%\draw[help lines] (0,0) grid (2,3);

\addplot [
color=blue,
solid,
mark=square,
mark options={solid}
]
table[row sep=crcr]{
    5.0000    3.0452\\
   20.0000   16.1403\\
   35.0000   30.8120\\
   50.0000   45.5406\\
   65.0000   60.4510\\
% 316.2278 14.5630\\
% 1000     14.6610\\
% 3162.3   14.7670\\
% 10000    14.8660\\
% 31623    14.9080\\
% 100000   14.9580\\
};
\addlegendentry{\scriptsize{$\epsilon=1$}};

\addplot [
color=red,
solid,
mark=triangle,
mark options={solid}
]
table[row sep=crcr]{
    5.0000    3.9387\\
   20.0000   18.4204\\
   35.0000   33.2660\\
   50.0000   48.3090\\
   65.0000   63.2483\\
% 316.2278 14.5630\\
% 1000     14.6610\\
% 3162.3   14.7670\\
% 10000    14.8660\\
% 31623    14.9080\\
% 100000   14.9580\\
};
\addlegendentry{\scriptsize{$\epsilon=10$}};

%\addplot [
%color=red,
%solid,
%mark=star,
%mark options={solid}
%]
%table[row sep=crcr]{
%    5.0000    4.1997\\
%   20.0000   18.7705\\
%   35.0000   33.6632\\
%   50.0000   48.7262\\
%   65.0000   63.7057\\
% 316.2278 14.5630\\
% 1000     14.6610\\
% 3162.3   14.7670\\
% 10000    14.8660\\
% 31623    14.9080\\
% 100000   14.9580\\
%};
%\addlegendentry{\scriptsize{$\epsilon=20$}};

\addplot [
color=black,
solid,
mark=*,
mark options={solid}
]
table[row sep=crcr]{
   5.0000    4.1677\\
   20.0000   19.2886\\
   35.0000   34.4212\\
   50.0000   49.4307\\
   65.0000   64.5405\\
% 316.2278 14.6305\\
% 1000     14.7128\\
% 3162.3   14.7985\\
% 10000    14.8839\\
% 31623    14.9249\\
% 100000   14.9655\\        
};
\addlegendentry{\scriptsize{$\epsilon=\frac{N}{2}$}};

\end{axis}
\end{tikzpicture}%
\caption{$\rho$ versus $N$ for different interference levels, $\epsilon$, path-loss exponent $\beta=3$ and for network density $\lambda = \frac{1}{\sqrt{N}}$.}
\label{fig:Fig05}
\end{figure}
%--------------------------------------------------------------------------
In this section, we present numerical simulations to check the performance of the proposed SDP relaxation and the randomization algorithm. In the simulation set-up, we generate $N=\lambda M$ points with a uniform distribution over a square $\el = [0,\sqrt{M}]^2$, where $\lambda$ is called the density of the network. We then evaluate the SDR given in (\ref{sdrgood}) using the MATLAB CVX toolbox \cite{cvx} to obtain the SDR solutions $\widehat{\H}$ and $\rho$. We compare the obtained SDR solution $\rho$, with the optimal packing $\sigma$, by solving (\ref{mainoptone}) using a brute force search for small values of $N\leq 20$. In the brute force search, $\sigma$ is obtained by iterating over all subset of nodes and picking the set with the largest cardinality that satisfies the constraints. We repeat the above experiment for $1000$ different spatial realizations for each $N$ with a fixed $\epsilon$ (and vice-versa) and average the values of $\rho$ and $\sigma$ obtained at each realization.

In Figure-\ref{fig:Fig01}, we compare the average $\rho$ (estimated by SDR) with average $\sigma$ (obtained by brute force search) as the number of nodes ($N$) increases for different network densities ($\lambda$) and a fixed interference level ($\epsilon = 10$). 

In Figure-\ref{fig:Fig03}, we compare the average $\rho$ (estimated by SDR) with average $\sigma$ (obtained by brute force search) as the interference level ($\epsilon$) increases for different network densities ($\lambda$) and a fixed number of nodes ($N = 15$). We observe that the number of active nodes initially increases sharply as the interference constraint is relaxed but stabilizes for large $\epsilon$. 

From Figures-\ref{fig:Fig01} and \ref{fig:Fig03}, we observe that $\sigma$ is always very close to $\rho$, in-fact we observe that $\rho-\sigma\leq 1$ in most cases. For the network density $\lambda = N^{-0.5}$, most of the nodes in the network are switched on. We also observed that the optimal vector, $\r$ obtained by the randomization algorithm was always same (after the transformation) as the optimal $\x$ obtained by brute force search for all realizations.   

In Figure-\ref{fig:Fig05}, we plot the average $\rho$ (estimated by SDR) as the number of nodes ($N$) increases for a fixed network density ($\lambda = \frac{1}{\sqrt{N}}$) and for different interference levels ($\epsilon$). We observe that changing the interference level has a small effect on the packing of network.

%We also performed simulations to check for the variation of $p^2 = \pr\big(\R-\Lambda\succeq\mathbf{0}\big)$ for different network densities (not shown here). We observed that the value of $p$ was less than $2/\pi$ for sparse networks ($\lambda\leq N^{-0.5}$) and was close to one for dense networks. Hence we can conclude that the upper bound on $p^2$ is tight for sparse networks.

From the simulations, we can conclude that the optimal packing of nodes depends critically on the density of the network. The other parameters such as the number of nodes ($N$), the interference level ($\epsilon$) and path-loss exponent ($\beta$) play a relatively minor role in the packing. 

We can also conclude that the optimal value of the packing problem obtained by the SDR is very close (In-fact, $\rho-\sigma<1$) to the actual optimal packing value. Hence, SDR is a tight approximation for the packing problem. 
%------------------------------------------------------------
%\input{fig07.tex}
%------------------------------------------------------------
% Conclusion Start
%----------------------------------------------------------------
\section{Conclusion}
In this work, we study the maximal packing problem under interference constraints where the goal is to find the maximum number of active nodes in an area such that the total interference in the network is less than some fixed value $\epsilon$. This is a discrete optimization problem which is NP hard. We propose a semi-definite relaxation (SDR) of the NP hard problem, whose solution upper bounds the number of active nodes in the network. Simulations are performed to compare the bounds provided by the SDR with a brute-force search solution of the unrelaxed problem for small networks and we observe that the SDP relaxation provides a good approximation to the packing problem. 
%-------------------------------
% Concl End
%----------------------------------------------------------------------
% Bib
%----------------------------------------------------------------------
\bibliographystyle{IEEEtran}
\bibliography{IEEEabrv,Bib_One}
%-----------------------------
\end{document}